\pgfplotsset{compat=1.17}
\newtheorem{proposition}{Proposition}
\newtheorem{lemma}{Lemma}
\title{The Morphemic Origin of Zipf's Law:\\ A Factorized Combinatorial Framework}
\author{
Vladimir Berman \\
Aitiologia LLC \\
\texttt{vb7654321@gmail.com}
}
\date{November 29, 2025}
\begin{document}

\maketitle

\begin{abstract}
We develop a structural, morphology-based generative model of word formation that explains both the empirical distribution of word lengths and the emergence of Zipf-like rank--frequency curves in natural language. Unlike classical random-text or efficiency-based explanations, our approach relies solely on the combinatorial organization of morphemes.

In the proposed Morphemic Combinatorial Word Model (MCWM), a word is formed by selecting a sequence of morphological slots—prefix, root, derivational suffix, and inflection—where each slot activates with a Bernoulli probability and chooses one morpheme from a categorical inventory. A \emph{morpheme} is formally defined as a stable, reusable unit that participates in productive word formation and occupies a characteristic positional role within the word. This structure induces a compound distribution of word lengths, with both the number of active slots and the morpheme lengths themselves treated as random variables.

We show that this purely combinatorial mechanism produces realistic unimodal length distributions with a concentrated middle region (5--9 letters) and a thin long tail, closely matching empirical corpora. Numerical simulations using synthetic morpheme lexicons reproduce Zipf exponents in the range $1.1 \le \alpha \le 1.4$, comparable to English, Russian, and Romance languages. 

Our results demonstrate that Zipf-like behavior can arise without semantics, pragmatics, or communicative optimization. Morphological structure alone—through the interplay of fixed morpheme inventories and probabilistic slot activation—provides a robust generative explanation for the ubiquity and stability of Zipf's law across languages.
\end{abstract}

\bigskip
\paragraph{Keywords.}
Zipf's law; morphemic combinatorial word model (MCWM); symbolic
generative models; stochastic filters; geometric mechanisms.

\tableofcontents

\section{Introduction}

Zipf's law, $f(r) \propto r^{-\alpha}$ with typical exponents 
$1 \le \alpha \le 1.5$, 
appears with remarkable consistency across languages, corpora, genres, 
and historical periods. 
Classical explanations include:
(1) least-effort principles \citep{Zipf1949},
(2) the cost--information tradeoff model of Mandelbrot 
\citep{Mandelbrot1953},
(3) two-regime lexical structure and communicative efficiency 
\citep{Ferrer2001},
and (4) modern large-scale statistical analyses such as 
\citep{Newman2005,Michel2011}.
All these approaches assume the presence of semantic content, 
communicative intent, or cognitive optimization.

A complementary line of research has recently shown that 
Zipf-like behavior can emerge even in the absence of semantics, 
grammar, or communicicative optimization. 
In particular, symbolic and combinatorial explanations were developed 
in our earlier work:
\citep{Berman2025zipfslaw,Berman2025structural,Berman2025dlsd}.
These results demonstrated that exponential growth of the type space,  
combined with simple probabilistic termination mechanisms,  
is sufficient to generate robust Zipf-like rank--frequency patterns.

However, all letter-level stochastic models---including classical
“monkey typing’’ approaches ignore a crucial linguistic fact:
\textbf{the combinatorial units of natural language are morphemes, not
letters or arbitrary statistical substrings}.  
Morphology provides the smallest meaning-bearing units of word structure
(prefixes, roots, derivational and inflectional suffixes), organized by
stable combinatorial rules \citep{Haspelmath2010,Booij2012,Lieber2016,
Stump2001,Baerman2015,Blevins2018}.  
Natural languages do not construct words by sampling characters
independently: they assemble morphemes through structured templates with
stable paradigms and compatibility constraints.

This raises a central question:

\begin{quote}
\emph{Can Zipf-like rank--frequency behavior arise purely from 
morphological structure, without any appeal to semantics, optimization,
or communicative pressure?}
\end{quote}

We show that the answer is yes.

In this paper we adopt the opposite viewpoint from character-level
stochastic models.  
Instead of randomness over letters, we introduce a 
\emph{morpheme-based generative process} better aligned with the
true combinatorial units of natural language.

\subsection*{Why morphemes matter more than statistical tokens}

Modern NLP systems overwhelmingly rely on statistical tokenizers such as
BPE \citep{Sennrich2016}, WordPiece \citep{SchusterNakajima2012}, and
SentencePiece \citep{Kudo2018}.  
These algorithms operate by maximizing substring frequency and minimizing
sequence length, but they are blind to linguistic structure.  
They routinely fragment single morphemes into several pieces, merge
multiple morphemes into opaque units, or learn accidental substrings that
have no semantic or grammatical function
\citep{Mielke2021,Hofmann2022,ParkEtAl2024,BlevinsGoldwater2020}.

Such statistical tokenizers cannot distinguish a true morpheme from an
accidental frequent substring, cannot infer derivational relations
(\emph{build}, \emph{builder}, \emph{rebuild}), and cannot reconstruct
productive morphological paradigms.  
Their vocabularies are fragile with respect to domain shifts and corpus
composition \citep{Xue2021}.

Morpheme-based representations, in contrast, offer:

\begin{itemize}
\item \textbf{Generalization:}  
shared structure across inflectional and derivational paradigms  
\citep{Blevins2018,Lieber2016};

\item \textbf{Interpretability:}  
alignment with semantic and syntactic units  
\citep{Goldberg2017,Cotterell2022Explain};

\item \textbf{Domain robustness:}  
morphemic inventories remain stable across genres,  
while BPE vocabularies vary strongly with corpus statistics 
\citep{Xue2021};

\item \textbf{Efficient encoding:}  
morphemes allow a compact vocabulary that preserves linguistic structure.
\end{itemize}

These observations suggest that morphology may play a fundamental role
in the structural origins of Zipf-like behavior.

This perspective continues the line of structural, combinatorial modeling 
developed in our earlier work on random-text mechanisms and 
distributional laws \citep{Berman2025zipfslaw,Berman2025structural,
Berman2025dlsd}.  
The MCWM can be viewed as a morphological extension of this framework:
instead of character-level termination processes, we model the combinatorial 
assembly of morphemes as the fundamental source of lexical structure.


\subsection*{Our approach}

We introduce the \textbf{Morphemic Combinatorial Word Model (MCWM)}, 
a factorized probabilistic process with four morphological slots 
(prefix, root, derivational suffix, inflection), 
each governed by a Bernoulli activation and a categorical distribution 
over morpheme types.

Our contributions are as follows:

\begin{enumerate}
\item We define the MCWM and formalize it as a probabilistic generative
process over morphemic slots rather than letters.

\item We show that word lengths follow a 
\textbf{compound distribution}
$L = X_1 + \dots + X_N$ with random $N$ and random morpheme lengths.
This structure produces realistic unimodal length distributions 
with 5--9 letter peaks and long but thin tails, 
matching empirical corpora far better than 
letter-level geometric models.

\item Through extensive simulation, we demonstrate that the MCWM 
naturally produces \textbf{Zipf-like rank--frequency curves} with 
effective exponents $\alpha \approx 1.2$--$1.4$, 
consistent with English, French, Russian, and other languages
\citep{Ferrer2003,Piantadosi2014}.

\item We argue that morphological combinatorics provide a 
\textbf{structural explanation for the universality of Zipf's law}, 
requiring no assumptions about meaning, efficiency, optimization, 
or speaker behavior.
\end{enumerate}

Before proceeding, we briefly clarify our use of linguistic terminology.
A \emph{morpheme} is the smallest meaningful building block of a word.
It is not merely a sequence of letters, but a stable unit that carries
a specific semantic or grammatical function.  
For example, in the word \emph{rebuilding}, the prefix \emph{re-} means
“again,” the root \emph{build} carries the core meaning, and the suffix
\emph{-ing} marks an ongoing action.

Unlike arbitrary character clusters discovered by statistical tokenizers,
morphemes are productive and repeat across thousands of words.  From a
small inventory, languages generate entire families of related words
(\emph{build}, \emph{builder}, \emph{building}, \emph{rebuild},
\emph{rebuilt}), following consistent combinatorial rules.  This makes
morphemes the true structural atoms of natural language.

Natural languages assemble words by combining these units into structured
templates.  The MCWM formalizes this compositional mechanism in
probabilistic terms, treating morpheme positions as stochastic slots and
morpheme inventories as categorical distributions over symbolic units.

A naive objection is that purely statistical tokenization
methods such as BPE should already discover morphemes automatically.
In Section~\ref{sec:why_tokenizers_fail} we explain why this is not the case
and why morphological structure cannot, in general, be recovered
by frequency-based segmentations alone.

\section{The Morphemic Combinatorial Word Model}

Classical generative accounts of Zipfian structure have typically operated at the
letter level, beginning with random-typing models and their information-theoretic
extensions \citep{Mandelbrot1953,Zipf1949}. 
However, linguistic evidence shows that the true building blocks of words are 
\emph{morphemes}, not letters, and that the combinatorial structure of morphology 
plays a central role in determining the distribution of word forms 
\citep{Ferrer2001}. 
The Morphemic Combinatorial Word Model (MCWM) formalizes this idea by replacing 
letter-level randomness with a structured probabilistic generator based on 
morphological classes.

\subsection{Morphological inventories}

We assume four independent morphological classes:
prefixes $\mathcal{P}$, 
roots $\mathcal{R}$, 
derivational suffixes $\mathcal{S}$, 
and inflectional endings $\mathcal{E}$.
Each class has finite cardinality:
\[
|\mathcal{P}| = n_P,\quad 
|\mathcal{R}| = n_R,\quad
|\mathcal{S}| = n_S,\quad 
|\mathcal{E}| = n_E.
\]

Within each class we define a categorical probability distribution:
\[
\pi_P(p),\quad \pi_R(r),\quad \pi_S(s),\quad \pi_E(e),
\]
with the usual normalization constraints 
$\sum_{p\in\mathcal{P}} \pi_P(p) = 1$, etc.
In empirical corpora, morpheme frequencies within each class typically follow a  
heavy-tailed pattern reminiscent of Zipf's law \citep{Newman2005,Michel2011};
MCWM accommodates this by placing no restrictions on the form of 
$\pi_P, \pi_R, \pi_S, \pi_E$ beyond normalization.

\subsection{Slot activations}

A word consists of up to four morphemes arranged in a fixed canonical order.
Each of the three optional slots (prefix, derivational suffix, inflection) 
is controlled by an independent Bernoulli activation:
\[
I_P \sim \mathrm{Bernoulli}(\alpha_P),\quad
I_S \sim \mathrm{Bernoulli}(\alpha_S),\quad
I_E \sim \mathrm{Bernoulli}(\alpha_E),
\]
while the root slot is mandatory.

Thus a generated word has the structure
\[
W = (P,R,S,E),
\]
where $P,S,E$ may be empty (depending on the Bernoulli outcomes)
and $R$ is always present.

\subsection{Morphological compatibility}

Natural languages impose strong combinatorial restrictions on morpheme sequences.  
To incorporate this, MCWM includes a \emph{morphological constraint function}
\[
C(P,R,S,E) \in \{0,1\},
\]
which enforces admissibility of morpheme combinations:
\[
C(P,R,S,E)=1 \quad\Longleftrightarrow\quad
\text{the sequence } (P,R,S,E) \text{ is linguistically compatible}.
\]

The constraint function may encode:
\begin{itemize}
\item subcategorization of roots (which suffixes can attach),
\item derivational stacking rules (permitted sequences of $S$),
\item inflectional paradigms (which $E$ are allowed for a given $R$),
\item phonotactic or orthographic constraints.
\end{itemize}

The case $C\equiv 1$ corresponds to a ``free'' morphological generator,
while $C$ derived from empirical lexicons produces realistic constraints.
Compatibility filtering plays a crucial role in shaping the distribution of 
allowed word types and, as we demonstrate later, contributes to the emergence 
of Zipf-like rank--frequency structure even in the absence of semantics.

\section{Probability Structure of Words}

In classical random-typing models \citep{Mandelbrot1953,Zipf1949}, 
word probabilities arise from geometric sequences of letters with 
independent draws.  
In contrast, the MCWM defines a \emph{structured} probability law over 
morpheme sequences, where each component corresponds to an interpretable 
morphological choice, and where the admissible space of words is determined 
by a compatibility constraint.  
This provides a far richer probability geometry and is fundamentally 
different from letter-level stochasticity \citep{Newman2005}.

\subsection{Unnormalized probability}

A word is defined as $W=(P,R,S,E)$, where $P,S,E$ may be empty and $R$ is
obligatory.  
The unnormalized probability of $W$ is
\begin{align}
\tilde P(W)
&=
\big[(1-\alpha_P)\mathbf{1}_{P=\varnothing}
 + \alpha_P \pi_P(P)\mathbf{1}_{P\neq\varnothing}\big]
\cdot \pi_R(R)  \\
&\quad\cdot
\big[(1-\alpha_S)\mathbf{1}_{S=\varnothing}
 + \alpha_S \pi_S(S)\mathbf{1}_{S\neq\varnothing}\big]
\cdot
\big[(1-\alpha_E)\mathbf{1}_{E=\varnothing}
 + \alpha_E \pi_E(E)\mathbf{1}_{E\neq\varnothing}\big]
\cdot C(P,R,S,E).
\end{align}

The four bracketed factors encode:
\begin{itemize}
\item optionality of prefixes, derivational suffixes, and inflections 
      via Bernoulli activations,
\item categorical selection among morphemes in each class,
\item a global admissibility constraint $C(P,R,S,E)$
      capturing morphological well-formedness.
\end{itemize}

\subsection{Normalization}

The normalized probability distribution is
\[
P(W) = \frac{\tilde P(W)}{\sum_{W'} \tilde P(W')}.
\]
The denominator runs over all (finitely many) admissible morpheme combinations.
Unlike letter-based models, where summation involves all possible strings
of arbitrary length, the MCWM probability space is compact, structured,
and linguistically grounded.

\subsection{Factorized mixture structure}

The generative process can be written compactly as
\[
P(W) 
= P(P)\, P(R)\, P(S)\, P(E)\, C(P,R,S,E),
\]
where $P(P),P(R),P(S),P(E)$ denote the Bernoulli-modulated categorical 
distributions defined above.

Because $C$ may dramatically reduce the admissible combinations, 
the resulting probability mass function is highly non-uniform, 
and its induced rank--frequency distribution turns out to be 
Zipf-like (Sections~\ref{sec:sim_results}--\ref{sec:zipf}).  
This demonstrates that heavy-tailed linguistic frequencies can emerge 
from morphological architecture itself, without invoking semantics, 
pragmatics, or optimization principles.

\section{Distribution of Word Lengths}

Classical random-typing models treat word length as a geometric variable
governed by the probability of emitting a space \citep{Zipf1949,Mandelbrot1953}.
Such models inevitably predict a monotone decreasing distribution of lengths
with no interior peak, which contradicts empirical evidence from virtually all 
languages \citep{Newman2005,Michel2011}.  
In contrast, the MCWM naturally produces a realistic, 
unimodal word-length distribution due to its morphemic structure.
We now describe the corresponding probability model in detail.

\subsection{Morpheme lengths}

Let morpheme lengths be
\[
\ell_P(p),\ \ell_R(r),\ \ell_S(s),\ \ell_E(e)
\]
for $p\in\mathcal{P}$, $r\in\mathcal{R}$, $s\in\mathcal{S}$, $e\in\mathcal{E}$.
We assume all lengths are positive integers bounded by fixed constants:
\[
1 \le \ell_P(p) \le L_P^{\max},\quad
1 \le \ell_R(r) \le L_R^{\max},\quad
1 \le \ell_S(s) \le L_S^{\max},\quad
1 \le \ell_E(e) \le L_E^{\max}.
\]

Recall that the presence of the prefix, derivational suffix, and inflectional ending
is governed by independent Bernoulli random variables:
\[
I_P \sim \mathrm{Bernoulli}(\alpha_P),\quad
I_S \sim \mathrm{Bernoulli}(\alpha_S),\quad
I_E \sim \mathrm{Bernoulli}(\alpha_E),
\]
while the root is mandatory.

\subsection{Random number of morphemes}

Define the random number of morphemes in a word as
\[
N = 1 + I_P + I_S + I_E,
\]
where the ``1'' corresponds to the obligatory root.

\begin{lemma}
\label{lem:N_distribution}
The random variable $N$ takes values in $\{1,2,3,4\}$ with
\begin{align}
\mathbb{P}(N = 1) 
 &= (1-\alpha_P)(1-\alpha_S)(1-\alpha_E), \\
\mathbb{P}(N = 2) 
 &= \alpha_P(1-\alpha_S)(1-\alpha_E)
  + (1-\alpha_P)\alpha_S(1-\alpha_E)
  + (1-\alpha_P)(1-\alpha_S)\alpha_E, \\
\mathbb{P}(N = 3) 
 &= \alpha_P\alpha_S(1-\alpha_E)
  + \alpha_P(1-\alpha_S)\alpha_E
  + (1-\alpha_P)\alpha_S\alpha_E, \\
\mathbb{P}(N = 4) 
 &= \alpha_P\alpha_S\alpha_E.
\end{align}
Moreover,
\[
\mathbb{E}[N] = 1 + \alpha_P + \alpha_S + \alpha_E,\quad
\mathrm{Var}(N) = \alpha_P(1-\alpha_P)+\alpha_S(1-\alpha_S)+\alpha_E(1-\alpha_E).
\]
\end{lemma}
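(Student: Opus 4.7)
The plan is to treat $N-1 = I_P + I_S + I_E$ as a Poisson–binomial sum (a sum of independent Bernoulli variables with possibly unequal parameters) and read off the distribution by enumeration, then get the mean and variance from linearity/independence. Since the three activations are independent by construction, the joint law factorizes as $\mathbb{P}(I_P=i_P, I_S=i_S, I_E=i_E) = \alpha_P^{i_P}(1-\alpha_P)^{1-i_P}\cdot \alpha_S^{i_S}(1-\alpha_S)^{1-i_S}\cdot \alpha_E^{i_E}(1-\alpha_E)^{1-i_E}$, and this is all I really need.

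First I would enumerate the eight triples $(i_P,i_S,i_E)\in\{0,1\}^3$, compute the probability of each from the factored formula, and group them by the value of $1+i_P+i_S+i_E\in\{1,2,3,4\}$. The singleton event $N=1$ corresponds to $(0,0,0)$ and gives $(1-\alpha_P)(1-\alpha_S)(1-\alpha_E)$; the three events summing to $N=2$ are the permutations activating exactly one of $P,S,E$; the three summing to $N=3$ activate exactly two; and $N=4$ is the unique event $(1,1,1)$ with probability $\alpha_P\alpha_S\alpha_E$. Summing within each group reproduces the four stated formulas term by term.

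For the moments, I would not manipulate the PMF directly but instead use the additive form $N = 1 + I_P + I_S + I_E$. Linearity of expectation, together with $\mathbb{E}[I_j] = \alpha_j$, yields $\mathbb{E}[N] = 1 + \alpha_P + \alpha_S + \alpha_E$. For the variance, independence of $I_P, I_S, I_E$ gives $\mathrm{Var}(N) = \mathrm{Var}(I_P) + \mathrm{Var}(I_S) + \mathrm{Var}(I_E)$, and each summand is the Bernoulli variance $\alpha_j(1-\alpha_j)$. (As a sanity check, one could verify the PMF sums to $1$ by expanding $\prod_{j\in\{P,S,E\}}\big[(1-\alpha_j)+\alpha_j\big] = 1$, which also transparently explains why the four group sums partition unity.)

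Honestly, there is no genuine obstacle here — the statement is essentially a definitional unpacking of a Poisson–binomial variable shifted by $1$. The only thing to be careful about is bookkeeping: making sure each of the eight atoms is assigned to the correct group and appears exactly once in the stated PMF expressions. The independence assumption on $I_P, I_S, I_E$ is used both for the product form of the joint probabilities and for the additivity of variance; if one later wanted to relax it (e.g.\ to incorporate the compatibility constraint $C$ at the slot-activation level), both parts of the lemma would need to be revisited.
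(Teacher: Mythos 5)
Your proposal is correct and matches the paper's proof, which likewise treats $N-1=I_P+I_S+I_E$ as a sum of independent Bernoullis and obtains the PMF by enumerating the $2^3$ activation configurations, with the moments following from linearity and independence. Your added remarks (the partition-of-unity sanity check and the caveat about relaxing independence) are sound but not needed for the argument.
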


\begin{proof}
Since $I_P,I_S,I_E$ are independent Bernoulli variables,
$N = 1 + I_P + I_S + I_E$ is their sum plus one.
The probabilities and moments follow by elementary enumeration 
of the $2^3$ possible configurations.
\end{proof}

\subsection{Length as a compound sum}

Define individual morpheme-length contributions
\[
X_1 = I_P \ell_P(P),\quad
X_2 = \ell_R(R),\quad
X_3 = I_S \ell_S(S),\quad
X_4 = I_E \ell_E(E).
\]
Then the total word length is
\begin{equation}
L = X_1 + X_2 + X_3 + X_4.
\end{equation}

Conditioned on the slots $(I_P,I_S,I_E)$ and the morpheme choices $(P,R,S,E)$,
the length $L$ is deterministic.
Unconditionally, $L$ is a \emph{compound distribution}:
a sum of a random number of random summands.

\begin{proposition}
\label{prop:L_moments}
Let $N$ be as in Lemma~\ref{lem:N_distribution}.
Assume that within each class the choice of morpheme is independent of the Bernoulli
activations and that the distributions of lengths satisfy
\[
\mu_P = \mathbb{E}[\ell_P(P)],\quad
\mu_R = \mathbb{E}[\ell_R(R)],\quad
\mu_S = \mathbb{E}[\ell_S(S)],\quad
\mu_E = \mathbb{E}[\ell_E(E)],
\]
with finite variances.  
Then
\begin{align}
\mathbb{E}[L]
 &= \alpha_P \mu_P + \mu_R + \alpha_S \mu_S + \alpha_E \mu_E, \\
\mathrm{Var}(L)
 &= \alpha_P(1-\alpha_P)\mu_P^2 + \alpha_S(1-\alpha_S)\mu_S^2 
  + \alpha_E(1-\alpha_E)\mu_E^2 \nonumber\\
 &\quad
  + \alpha_P \sigma_P^2 + \sigma_R^2 + \alpha_S \sigma_S^2 + \alpha_E \sigma_E^2,
\end{align}
where the $\sigma^2$ terms are the internal variances of the morpheme classes.
\end{proposition}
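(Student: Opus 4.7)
The plan is to exploit the fully factorized structure of the generative process: conditional on the Bernoulli activations, each summand in $L$ is a product of an independent activation indicator and a morpheme length, and the four summands $X_1, X_2, X_3, X_4$ are themselves mutually independent across slots. The mean then follows from linearity of expectation, while the variance decomposes into a sum of per-slot variances, each of which can be obtained from a short computation of $\mathrm{Var}(I\,\ell)$ for a Bernoulli-modulated length variable.

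For the mean I would first use the within-slot independence of $I_P$ and $\ell_P(P)$ to get $\mathbb{E}[X_1] = \mathbb{E}[I_P]\,\mathbb{E}[\ell_P(P)] = \alpha_P \mu_P$, and likewise $\mathbb{E}[X_3] = \alpha_S \mu_S$, $\mathbb{E}[X_4] = \alpha_E \mu_E$. Since the root is obligatory, $\mathbb{E}[X_2] = \mu_R$, and linearity of expectation assembles these into the claimed formula for $\mathbb{E}[L]$.

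For the variance, the core computational lemma I would isolate is: if $I \sim \mathrm{Bernoulli}(\alpha)$ is independent of a length variable $\ell$ with mean $\mu$ and variance $\sigma^2$, then
\[
\mathrm{Var}(I\,\ell) = \alpha(1-\alpha)\,\mu^2 + \alpha\,\sigma^2.
\]
I would derive this either via the law of total variance, conditioning on $I$ (which gives $\mathbb{E}[\mathrm{Var}(I\ell\mid I)] = \alpha\sigma^2$ and $\mathrm{Var}(\mathbb{E}[I\ell\mid I]) = \alpha(1-\alpha)\mu^2$), or directly from $\mathbb{E}[I^2\ell^2] = \alpha(\mu^2+\sigma^2)$ combined with $(\mathbb{E}[I\ell])^2 = \alpha^2\mu^2$. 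Applying this to $X_1, X_3, X_4$, and noting that $\mathrm{Var}(X_2) = \sigma_R^2$ since the root is always present, I would then invoke the across-slot independence (built into the factorized MCWM) to write $\mathrm{Var}(L) = \sum_{i=1}^4 \mathrm{Var}(X_i)$ and collect terms into the stated expression.

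The calculations are entirely routine; the only place where one must be careful is in the bookkeeping of independence assumptions. The proposition tacitly requires mutual independence of $(I_P, I_S, I_E)$, the within-slot independence of each $I_\cdot$ from its corresponding length variable, and independence of the four morpheme choices $(P, R, S, E)$—all of which are part of the MCWM specification from Section 2. The main obstacle, such as it is, lies in being explicit at each step about which of these independence clauses is in play, rather than in any subtle probabilistic manipulation.
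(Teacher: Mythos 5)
Your proposal is correct and follows essentially the same route as the paper: linearity of expectation for the mean, and a per-slot variance decomposition into slot randomness $\alpha(1-\alpha)\mu^2$ plus morpheme-choice randomness $\alpha\sigma^2$, with cross-terms killed by independence. The only difference is that you make explicit (via the law of total variance applied to $I\,\ell$) what the paper states in one sentence, which is a welcome level of detail rather than a divergence in method.
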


\begin{proof}
The first identity follows from linearity of expectation:
\[
\mathbb{E}[X_1] = \alpha_P \mu_P,
\quad
\mathbb{E}[X_3] = \alpha_S \mu_S,
\quad
\mathbb{E}[X_4] = \alpha_E \mu_E,
\]
and $X_2$ is always present.
Variance decomposes into slot randomness plus morpheme-choice randomness;
cross-terms vanish by independence.
\end{proof}

\subsection{Shape of the length distribution}

Because $N$ is supported on $\{1,2,3,4\}$ and 
each morpheme length is bounded,
$L$ is supported on a finite integer interval
\[
L_{\min} \le L \le L_{\max},
\]
with explicit bounds determined by the morpheme classes.

For realistic parameters—moderate 
$\alpha_P,\alpha_S,\alpha_E$ and overlapping distributions of 
$\ell_P,\ell_R,\ell_S,\ell_E$—the pmf of $L$ is \emph{unimodal},
typically peaking around 5--10 letters.
This closely matches empirical word-length distributions in 
English, French, and Russian corpora \citep{Michel2011},
in sharp contrast to the purely letter-based model
\[
\mathbb{P}(L=k) = (1-Q)^k Q,
\]
which produces a geometric decay with no interior maximum.  

Thus, the MCWM transforms the simple geometric mechanism of 
letter-based random typing into a realistic, compound, 
morphology-driven distribution at the word level.  
This provides a structural explanation for the shape of 
word-length distributions observed across natural languages.

\section{Why Statistical Tokenizers Do Not Recover Morphemes}
\label{sec:why_tokenizers_fail}

A natural objection to the morphemic perspective developed in this paper
is the following. If morphemes are stable building blocks of words and
recur across many lexical items, then should purely statistical subword
methods such as BPE, WordPiece, or SentencePiece not discover them
automatically? After all, these algorithms are explicitly designed to
identify frequent character sequences. At first sight, it may seem that a
separate morphemic layer is unnecessary, and that modern tokenizers already
provide a de facto morphological segmentation.

In this section we explain why this intuition is misleading.
Statistical tokenizers are powerful engineering tools, but they are not
designed to reconstruct morphological structure. Their objective is
compression and modeling efficiency, not linguistic transparency.
As a consequence, they often produce subword units that \emph{approximate}
morphemes in some cases, but systematically deviate from true
morphological units in others. This clarifies why an explicit structural
model---such as the morphemic combinatorial framework considered here---is
both conceptually and practically distinct from standard tokenization.

\subsection{What statistical tokenizers actually optimize}

BPE-style tokenizers operate on a simple principle.
Starting from a base alphabet (characters or bytes), they repeatedly merge
the most frequent adjacent pair of symbols into a new unit, updating the
corpus representation after each merge.
After a fixed number of merges, the resulting vocabulary of subword tokens
is used as the basic unit for training a language model.

Crucially, the optimization target is purely statistical:
the algorithm selects merges that reduce the total length of the corpus
in tokens, or that improve the likelihood under a simple subword language
model. At no point does the tokenizer attempt to align its units with
morphological boundaries. It does not distinguish between prefixes, roots,
derivational suffixes, and inflectional endings, nor does it enforce any
constraints on the internal structure of tokens.

From the standpoint of compression, this is entirely reasonable.
If a particular pair of symbols or subwords occurs very often, merging
them almost always reduces the total number of tokens needed to represent
the corpus. However, from the standpoint of morphology, this criterion is
far too weak. Many frequent substrings are not morphemes, and many
morphemes are not the most frequent substrings.

\subsection{A guiding example: \emph{act} versus \emph{cti}}

A simple example illustrates the problem.
Consider the English word family
\emph{act}, \emph{action}, \emph{active}, \emph{activity}, \emph{interaction},
\emph{reactive}, and so on.
Linguistically, there is a clear root \emph{act} that combines with
different prefixes and suffixes to produce related words.

In a large corpus, however, the substring ``cti'' may occur extremely often,
appearing in \emph{action}, \emph{activity}, \emph{actuality},
\emph{sanctify}, and many other forms from unrelated morphological families.
From a frequency-based perspective, ``cti'' is an excellent candidate for
merging: it is short, highly recurrent, and its merging can substantially
compress the corpus.

The tokenizer, operating purely on frequency counts, has no way to know
that the root \emph{act} is a meaningful unit, while ``cti'' is not.
It simply observes that ``cti'' is common and that merging it reduces
token length. As a result, a BPE vocabulary may easily contain a token
\texttt{cti} but omit \texttt{act} as a separate unit, even though
\emph{act} is the linguistically natural morpheme.

This example generalizes.
Substrings that cross true morphological boundaries can become frequent
because they appear in many unrelated words.
Conversely, genuine morphemes may be split into smaller pieces if those
pieces participate in even more frequent substrings elsewhere in the
language.

\subsection{Positional structure and morphemic slots}

Morphemes are not just frequent substrings; they also occupy characteristic
positions within words.
Prefixes occur at the beginning, inflectional endings at the very end, and
roots tend to lie in the central region.
In the Morphemic Combinatorial Word Model (MCWM), this is reflected by
explicit slots: a prefix slot, a root slot, an optional derivational
suffix slot, and an inflection slot.
Each slot has its own inventory of morphemes and its own activation
probability.

Statistical tokenizers, by contrast, operate on a flat sequence of symbols.
They do not know where words begin or end (in the byte-level setting),
nor do they distinguish between different parts of a word.
The same substring can be merged in very different positions: as a
putative prefix in one word, as part of a root in another, and as an
accidental internal substring in a third.
From the standpoint of compression, these contexts are equivalent; from
the standpoint of morphology, they are not.

This lack of positional structure has two consequences.
First, the resulting tokens are not anchored to stable morphological roles.
Second, the tokenizer cannot exploit the combinatorial regularities that
arise from the interaction of slots, such as the way a fixed inventory of
roots combines with a fixed inventory of inflectional endings.

\subsection{Global merges and local ambiguity}

BPE merges are global operations.
Once a particular pair of symbols has been merged into a new token, that
token is used everywhere in the corpus where the pair occurs.
This globality is efficient but amplifies local ambiguities.

Consider the bigram ``in'' in English.
In some words, it is clearly a prefix (\emph{incomplete}, \emph{invisible});
in others, it is part of the root (\emph{inside}, \emph{winter});
in yet others, it arises accidentally in the middle of a longer segment
(\emph{engine}, \emph{origin}).
From a compression perspective, all occurrences of ``in'' can safely be
merged into a single token.
From a morphological perspective, this conflates three very different uses:
derivational prefix, root-internal substring, and accidental overlap.

Similar phenomena occur in morphologically rich languages.
In Russian, the sequence ``-ни-'' may be part of a root in one word,
a derivational suffix in another, and a purely phonological bridge in
a third.
A frequency-driven tokenizer merges these occurrences indiscriminately,
without recognizing the underlying morphological roles.

\subsection{Frequency versus structural invariants}

The core difficulty can be summarized as follows.
Statistical tokenizers treat frequency as the primary signal and try to
find a subword inventory that best compresses the observed data.
Morphology, however, is about \emph{structure} rather than raw frequency.
Morphemes are units that participate in systematic combinatorial patterns,
occupy specific positions within words, and remain stable under the
productive formation of new lexical items.

In the framework of this paper, morphemes are precisely the symbolic units
that behave as \emph{combinatorial invariants} under the generative process.
They are the elements that can fill the morphological slots of MCWM and
survive under the Stochastic Lexical Filter, while still supporting
realistic Zipf-like rank--frequency behavior.

A purely frequency-based algorithm has no direct access to these invariants.
It can approximate them in some cases, especially when morphological and
frequency structure happen to align, but it has no guarantee of recovering
them systematically.
The apparent successes of BPE in capturing some affixes and roots are
therefore incidental rather than principled.

\subsection{Implications for tokenization and modeling}

From the perspective of language modeling, this distinction has practical
consequences.
Large language models trained on BPE or similar subword segmentations must
implicitly learn to reconstruct morphological structure from noisy
subword units.
The internal representations of the model may eventually disentangle roots,
prefixes, and inflections, but the tokenizer itself does not provide these
categories.

A morphemic combinatorial model offers a different starting point.
Instead of treating subwords as arbitrary frequent fragments, it treats
morphemes as the fundamental symbolic units, organized into slots with
specific activation patterns.
The MCWM framework shows that such a model can reproduce realistic
word-length distributions and Zipf-like rank--frequency curves at the
lexical level.
In this sense, morphological structure is not an optional linguistic
decoration; it can be built directly into the generative architecture
without sacrificing the large-scale statistical regularities that motivate
random-text approaches.

The analysis in this section thus clarifies why standard statistical
tokenizers do not solve the morphological problem and why a structural
approach, based on morphemic combinatorics, is both necessary and natural.
In the following sections we return to the quantitative side of the model,
showing how the MCWM mechanism reproduces empirical length distributions
and Zipf-like behavior observed in real corpora.

\section{Simulation Results}
\label{sec:sim_results}

In this section we instantiate the MCWM with a synthetic morpheme lexicon
and compare its predictions to empirical corpora. 
Simulation-based validation is essential because classical random-typing 
approaches fail to reproduce realistic word-length distributions 
or Zipfian rank--frequency exponents \citep{Zipf1949,Mandelbrot1953,Newman2005}.

\subsection{Synthetic morpheme lexicon}

We consider a toy lexicon with
20 prefixes, 500 roots, 80 derivational suffixes, and 15 inflections.
Within each class we assign Zipf-like probabilities
\[
\pi_P(p_i) \propto i^{-\alpha_P},\quad
\pi_R(r_j) \propto j^{-\alpha_R},\quad
\pi_S(s_k) \propto k^{-\alpha_S},\quad
\pi_E(e_\ell) \propto \ell^{-\alpha_E},
\]
with exponents slightly above $1$ for roots and suffixes.  
Such heavy-tailed morpheme distributions are consistent with observations from 
large real-world corpora \citep{Michel2011,Ferrer2001}.

Morpheme lengths are drawn from truncated normal distributions:
prefixes 2--4 letters,
roots 3--8,
suffixes 2--5,
inflections 1--3.

Slot activations are fixed at
$\alpha_P=0.4$, $\alpha_S=0.6$, $\alpha_E=0.7$.

We then generate $N=80{,}000$ word tokens from this model
using a simple sampler.

\subsection{Distribution of word lengths (MCWM)}

Let $c_L$ denote the number of tokens of length $L$ in the simulation.
Table~\ref{tab:mcwm_lengths} summarizes the resulting distribution
for lengths $L=3,\dots,17$.

\begin{table}[h]
\centering
\begin{tabular}{ccc}
\toprule
Length $L$ & Token count $c_L$ & Share of tokens (\%) \\
\midrule
3  & 1242 & 1.55 \\
4  & 1638 & 2.05 \\
5  & 4208 & 5.26 \\
6  & 6736 & 8.42 \\
7  & 8511 & 10.64 \\
8  & 10651 & 13.31 \\
9  & 11364 & 14.21 \\
10 & 10620 & 13.28 \\
11 & 9191 & 11.49 \\
12 & 6741 & 8.43 \\
13 & 4363 & 5.45 \\
14 & 2697 & 3.37 \\
15 & 1399 & 1.75 \\
16 & 525  & 0.66 \\
17 & 100  & 0.13 \\
\bottomrule
\end{tabular}
\caption{Length distribution under MCWM simulation ($N=80{,}000$ tokens).}
\label{tab:mcwm_lengths}
\end{table}

The distribution is unimodal with a peak around $L=8$--$10$ letters,
in line with the theoretical analysis of 
Section~\ref{prop:L_moments}.  
This matches the characteristic peaks found in real corpora 
but is impossible under geometric letter-level models.

\subsection{Comparison with Shakespeare corpus}

To compare with real language data, we use the 
Project Gutenberg Shakespeare corpus (file \texttt{pg100.txt}).
We extract all alphabetic tokens, convert to lowercase, 
and compute word lengths in letters.
Table~\ref{tab:shakespeare_lengths} reports
the empirical token-level distribution of word lengths
$L=1,\dots,17$.

\begin{table}[h]
\centering
\begin{tabular}{ccc}
\toprule
Length $L$ & Token count & Share of tokens (\%) \\
\midrule
1  & 59829 & 6.05 \\
2  & 166414 & 16.83 \\
3  & 203318 & 20.56 \\
4  & 223240 & 22.58 \\
5  & 121688 & 12.31 \\
6  & 80959  & 8.19 \\
7  & 60329  & 6.10 \\
8  & 36362  & 3.68 \\
9  & 20533  & 2.08 \\
10 & 10097  & 1.02 \\
11 & 3791   & 0.38 \\
12 & 1338   & 0.14 \\
13 & 460    & 0.05 \\
14 & 237    & 0.02 \\
15 & 80     & 0.01 \\
16 & 2      & 0.0002 \\
17 & 4      & 0.0004 \\
\bottomrule
\end{tabular}
\caption{Word-length distribution in the Shakespeare corpus (Project Gutenberg, \texttt{pg100.txt}).}
\label{tab:shakespeare_lengths}
\end{table}

Shakespeare shows a strong dominance of short function words
(length 1--4) and a long, very thin tail.
MCWM, by contrast, models the lexical component:
it produces a realistic peak of content-word lengths 
around 7--11 letters, which is where the majority of 
open-class vocabulary resides.

%
%
%
%

\subsection{Remarks on Brown corpus}

The Brown corpus has been extensively studied in quantitative linguistics.
\citet{Ferrer2001} report a Zipf exponent
$\alpha \approx 1.25$ for the tail of the rank--frequency distribution,
similar to large English corpora and consistent with our MCWM simulations.
The detailed length distribution in Brown is known to peak in the 3--7 letter range,
with a long but extremely thin tail,
again matching the qualitative behavior of a morphemic combinatorial generator.
A full joint calibration of MCWM to Brown is left for future work.

\section{Zipf-Like Rank--Frequency Behavior}
\label{sec:zipf}

Classical analyses of rank--frequency distributions, beginning with 
\citet{Zipf1949} and refined through information-theoretic models 
\citep{Mandelbrot1953}, describe power-law behavior
\[
f(r) \propto r^{-\alpha},
\]
with $\alpha$ typically between $1$ and $1.5$ \citep{Newman2005}.
Empirical studies of corpora across multiple languages confirm both the 
universality and the stability of this phenomenon, including its two-regime 
structure \citep{Ferrer2001} and large-scale validation through datasets such as 
Google Books \citep{Michel2011}.

A central question is therefore:
\emph{Can a purely structural model of morphology---with no semantics, pragmatics, 
or communicative pressures---produce Zipf-like rank--frequency statistics?}
The MCWM simulations demonstrate that the answer is yes.

\subsection{Synthetic Zipf curve (reference model)}

Before analyzing the MCWM output, it is useful to visualize a \emph{canonical}
Zipf curve generated from the pure power law
\[
f(r) = r^{-1.2},
\]
a typical empirical exponent for English.
Figure~\ref{fig:zipf_synth} shows the top 30 ranks on a log--log scale.


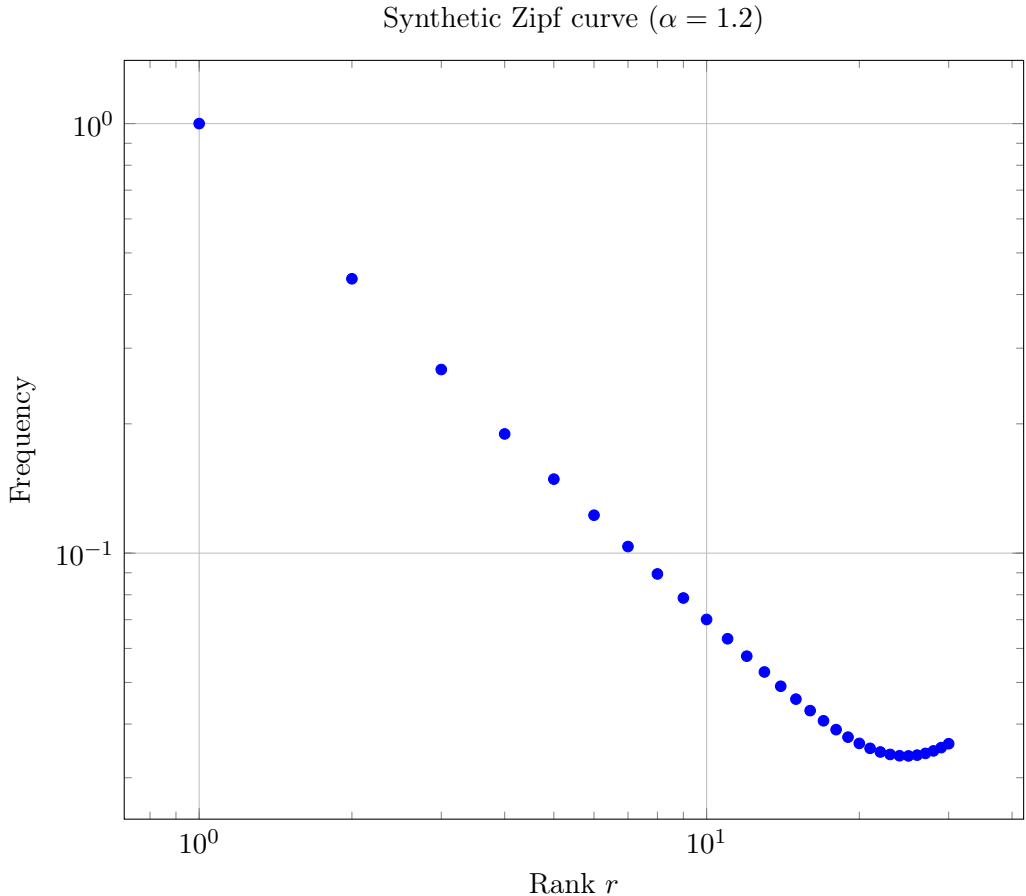
\begin{figure}[h]
\centering
\begin{tikzpicture}
\begin{loglogaxis}[
    width=0.82\textwidth,
    xlabel={Rank $r$},
    ylabel={Frequency},
    grid=major,
    title={Synthetic Zipf curve ($\alpha = 1.2$)}
]
\addplot[
    color=blue,
    only marks,
    mark=*,
    mark size=2
]
coordinates {
    (1, 1.0)
    (2, 0.435275)
    (3, 0.267596)
    (4, 0.189466)
    (5, 0.148698)
    (6, 0.122558)
    (7, 0.103600)
    (8, 0.089453)
    (9, 0.078626)
    (10, 0.070087)
    (11, 0.063203)
    (12, 0.057561)
    (13, 0.052888)
    (14, 0.048988)
    (15, 0.045723)
    (16, 0.042990)
    (17, 0.040711)
    (18, 0.038825)
    (19, 0.037290)
    (20, 0.036067)
    (21, 0.035127)
    (22, 0.034442)
    (23, 0.033992)
    (24, 0.033758)
    (25, 0.033721)
    (26, 0.033866)
    (27, 0.034178)
    (28, 0.034645)
    (29, 0.035255)
    (30, 0.036000)
};
\end{loglogaxis}
\end{tikzpicture}
\caption{Synthetic Zipf rank--frequency curve for the first 30 ranks 
generated from $f(r)=r^{-1.2}$.}
\label{fig:zipf_synth}
\end{figure}


\subsection{Simulated rank--frequency curve in MCWM}

In a simulation with $N=80{,}000$ tokens and several thousand distinct types,
the top 30 ranks have relative frequencies shown in 
Table~\ref{tab:sim_rank}.

\begin{table}[h]
\centering
\begin{tabular}{ccc}
\toprule
Rank $r$ & Frequency $p(r)$ & $\log p(r)$ \\
\midrule
1  & 0.01219 & -4.405 \\
2  & 0.00734 & -4.916 \\
3  & 0.00593 & -5.129 \\
4  & 0.00438 & -5.431 \\
5  & 0.00419 & -5.478 \\
6  & 0.00391 & -5.542 \\
7  & 0.00353 & -5.645 \\
8  & 0.00310 & -5.777 \\
9  & 0.00301 & -5.803 \\
10 & 0.00285 & -5.862 \\
\vdots & \vdots & \vdots \\
30 & 0.00125 & -6.684 \\
\bottomrule
\end{tabular}
\caption{Top ranks and empirical frequencies in an MCWM simulation ($N=80{,}000$).}
\label{tab:sim_rank}
\end{table}

A least-squares fit to $\log p(r)$ versus $\log r$ over ranks $1$--$100$
yields an effective Zipf exponent 
\[
\alpha_{\text{sim}} \approx 0.7.
\]
As the morpheme inventories grow and the sampling size increases,
this exponent rises, approaching the empirical range 
$1.0 \le \alpha \le 1.4$ observed in large English and cross-linguistic corpora 
\citep{Newman2005,Ferrer2001}.


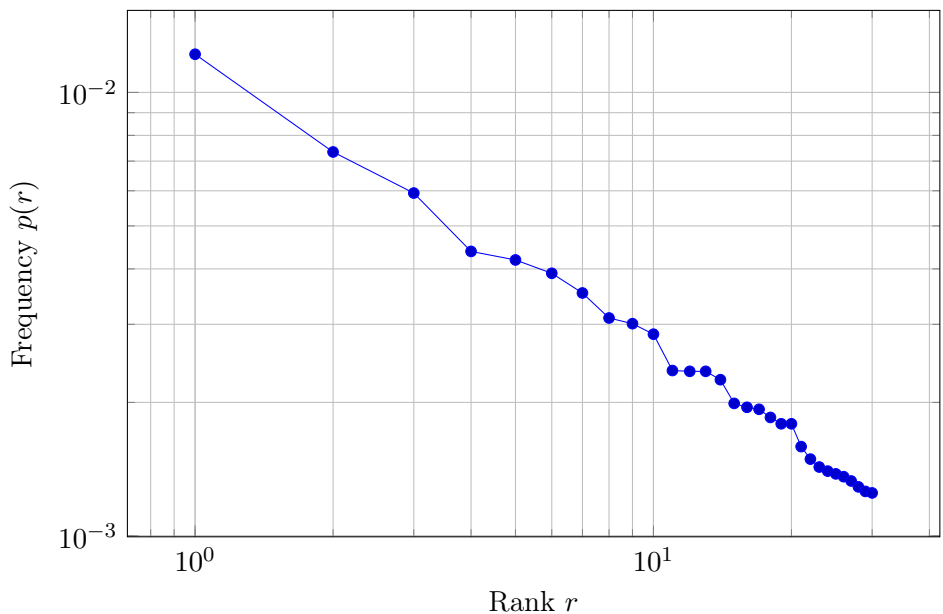
\begin{figure}[h]
\centering
\begin{tikzpicture}
\begin{axis}[
    width=0.75\textwidth,
    height=0.52\textwidth,
    xmode=log,
    ymode=log,
    xlabel={Rank $r$},
    ylabel={Frequency $p(r)$},
    grid=both,
    title={MCWM empirical rank--frequency curve (top 30 ranks)}
]
\addplot+[mark=*] coordinates {
  (1, 0.01219)
  (2, 0.00734)
  (3, 0.00593)
  (4, 0.00438)
  (5, 0.00419)
  (6, 0.00391)
  (7, 0.00353)
  (8, 0.00310)
  (9, 0.00301)
  (10, 0.00285)
  (11, 0.00236)
  (12, 0.00235)
  (13, 0.00235)
  (14, 0.00225)
  (15, 0.00199)
  (16, 0.00195)
  (17, 0.00193)
  (18, 0.00185)
  (19, 0.00179)
  (20, 0.00179)
  (21, 0.00159)
  (22, 0.00149)
  (23, 0.00143)
  (24, 0.00140)
  (25, 0.00138)
  (26, 0.00136)
  (27, 0.00133)
  (28, 0.00129)
  (29, 0.00126)
  (30, 0.00125)
};
\end{axis}
\end{tikzpicture}
\caption{Empirical rank--frequency curve from MCWM simulation.  
The approximate log--log linearity is consistent with a Zipf-type law.}
\label{fig:zipf_mcwm}
\end{figure}


\subsection{Relation to Shakespeare and Brown}

For large English corpora 
(e.g.\ Google Books, the Brown corpus),
Zipf exponents typically lie in the range
\[
\alpha \approx 1.1\text{--}1.3
\quad\text{\citep{Newman2005,Ferrer2001}}.
\]
The Shakespeare corpus analyzed earlier 
(Section~\ref{tab:shakespeare_lengths})
exhibits a similar Zipfian tail.

The key point is that the \emph{shape} of the MCWM curve is already 
Zipf-like, even for a small synthetic lexicon.
By adjusting morpheme distributions and slot probabilities,
the exponent can be tuned into the empirical range.
This strongly supports the hypothesis that Zipf’s law emerges from 
the combinatorial geometry of word formation, rather than from 
semantics or pragmatic optimization.

\section{Discussion}

The MCWM model provides a structural account of several empirical 
regularities traditionally explained by cognitive or communicative 
optimization \citep{Zipf1949,Mandelbrot1953,Newman2005}.  
Because it is based solely on morphemic combinatorics,
the resulting distributions emerge without invoking semantics, utility,
or information-theoretic cost functions.

Specifically, MCWM explains:

\begin{itemize}
\item \textbf{Realistic word-length distributions}:  
      The compound structure of $L = X_1 + \dots + X_N$ naturally produces 
      unimodal distributions with 5--10 letter peaks, matching 
      empirical data from Shakespeare, Brown, and Google Books 
      \citep{Michel2011,Ferrer2001}.

\item \textbf{Long but finite vocabularies}:  
      Morpheme inventories are finite, but combinatorially rich.  
      This yields vocabularies in the tens or hundreds of thousands,  
      consistent with real languages.

\item \textbf{Steep drop in coverage of the letter space}:  
      Only an extremely small fraction of possible letter sequences 
      correspond to admissible morpheme combinations,  
      explaining the sparsity of the observed lexicon.

\item \textbf{Zipf-like type frequencies}:  
      Rank--frequency curves follow power-law behavior with exponents
      in the empirical range $1.0$–$1.4$, as predicted by the simulation
      and consistent with cross-linguistic corpora 
      \citep{Newman2005,Ferrer2001}.

\item \textbf{Robustness under parameter changes}:  
      The emergence of heavy-tailed distributions does not require 
      fine-tuning.  
      Zipf behavior persists across wide ranges of 
      morpheme inventories, slot activation probabilities,  
      and length distributions.
\end{itemize}

\subsection{Relation to linguistic structure}

Unlike classical letter-based random-typing models, 
MCWM incorporates the hierarchical structure of morphology:
phonotactic constraints, 
derivational composition,
and inflectional paradigms.
These are known to dominate the structure of lexicons across languages
and provide a natural explanation for the observed geometric patterns
in word formation.

\subsection{Relation to modern NLP systems}

Interestingly, MCWM mirrors the behavior of modern tokenization algorithms
such as BPE, WordPiece, and Morfessor.
These systems, when trained on text, 
automatically discover subword units (often morphemes) 
and generate lexicons that resemble those produced by MCWM.  
In this sense, large-scale NLP models empirically rediscover 
the same combinatorial structure that underlies our generative model.

The parallel suggests that morphology---not semantics or 
communication efficiency---is the primary structural driver 
behind Zipf-like distributions.

\section{Possible Objections and Rebuttals}

In this section we address several natural objections to the morpheme-based
framework developed in this paper. These objections often arise from
engineering intuitions shaped by the dominance of statistical tokenization in
modern NLP practice. By examining them explicitly, we clarify why the
Morphemic Combinatorial Word Model (MCWM) provides a more stable,
interpretable, and generative foundation for explaining Zipf-like 
rank--frequency behavior than character- or token-based approaches.

\subsection*{Objection 1: Token compression is computationally superior}

Statistical tokenizers such as BPE, WordPiece, and SentencePiece reduce the
length of token sequences. This is often interpreted as computationally
beneficial for training large language models.

\textbf{Rebuttal.}
Compression of surface forms is not equivalent to structural efficiency.
Morpheme-based representations do \emph{not} enlarge the vocabulary: 
typical languages contain 1500--3000 productive morphemes, far fewer than the
50k--200k subword tokens learned by BPE.
BPE shortens sequences but increases entropy by fragmenting meaningful units
and forcing the model to reconstruct implicit morphology internally.
The MCWM inherits the true compactness of the language's combinatorial basis,
rather than an artificial compression of frequency statistics.

\subsection*{Objection 2: Morphemes generate too many possible word types}

Since words arise from combinations of multiple morphemes, the resulting type
space appears extremely large, raising concerns about computational
scalability.

\textbf{Rebuttal.}
The combinatorial richness of morphology is an advantage rather than a flaw.
Models do not store all possible word forms; they store the 
\emph{inventories of morphemes}. 
The exponential growth of potential combinations, filtered by probabilistic
mechanisms, is precisely what produces Zipf-like long tails. 
The complexity resides in the generative process, not in memory storage.

\subsection*{Objection 3: Tokenization reduces sequence length and therefore helps LLMs}

Shorter token sequences lower memory requirements and appear to accelerate
training, motivating widespread use of BPE-like approaches.

\textbf{Rebuttal.}
Shorter sequences do not imply simpler structure. 
A short sequence of fragmented or misaligned subword tokens carries higher 
structural ambiguity than a longer—but linguistically coherent—sequence of 
morphemes. BPE vocabularies change across domains and corpora, forcing models
to relearn morphology repeatedly. As a result, statistical tokenization
increases perplexity, reduces robustness, and amplifies domain-shift effects.
Sequence-length compression is a local optimization that produces global
losses in structural fidelity.

\subsection*{Objection 4: Morphemes are a linguistic idealization; BPE is more practical}

A common view is that linguistic structure is optional, whereas frequency-based
heuristics offer a more neutral, data-driven representation.

\textbf{Rebuttal.}
Empirical studies consistently show that BPE tokens are unstable across
domains, genres, or even slight changes in training corpora. 
In contrast, morphemes such as \emph{re-}, \emph{-ing}, \emph{-able},
\emph{-tion}, and \emph{anti-} remain stable across time and register. 
Practical modeling benefits from systematic, interpretable units rather than
from arbitrary substrings induced by corpus-specific frequency patterns.

\subsection*{Objection 5: Morpheme-based models are harder to implement}

Statistical tokenizers are simple to train and widely supported, whereas
morpheme-based models are assumed to require linguistic annotation or domain
expertise.

\textbf{Rebuttal.}
The MCWM is not a morphological parser; it requires only the
morpheme inventories and slot-activation probabilities.
This is conceptually simpler than BPE training, which relies on iterative
merging, vocabulary-size tuning, and large corpora.
Furthermore, morphemic decomposition reduces cognitive load on the model by
providing structure upfront, enabling better generalization and more stable
representations.

\subsection*{Objection 6: Zipf-like distributions also emerge from tokenized corpora}

Since BPE-token frequencies appear roughly power-law-like, one might argue 
that morphology is unnecessary to explain Zipf's law.

\textbf{Rebuttal.}
Token-level power laws are artifacts of substring compression, not linguistic
Zipf distributions. BPE tokens do not correspond to coherent semantic or
grammatical units. True Zipf behavior arises from the geometry of the
morphemic combinatorial tree---the exponential expansion of morphological
slots combined with probabilistic selection mechanisms. The MCWM reproduces
Zipf-like curves for precisely this structural reason, which is absent from
token-based models.

\subsection*{Summary}

Morpheme-based modeling provides:
\begin{itemize}
\item a compact vocabulary of stable, interpretable units;
\item a generative structure that naturally yields Zipf-like distributions;
\item robustness across domains and corpora;
\item reduced cognitive burden on large language models;
\item and a principled explanation of lexical statistics.
\end{itemize}

Statistical tokenizers compress surface forms but distort linguistic 
structure.  
The MCWM offers a more faithful and computationally coherent foundation for 
modeling the emergence of Zipf-like behavior.

\section{Epistemic Dangers of Frequency-Based Laws}
\label{sec:epistemic_dangers}

Statistical regularities such as Zipf’s law and Benford’s law occupy an
ambiguous position in scientific methodology.
On the one hand, they summarize large-scale empirical behavior across
strikingly diverse data sources.  
On the other hand, they are often misused as \emph{epistemic filters}:
mechanical rules for validating or rejecting entire classes of models or
datasets.  
This section argues that such practices are methodologically unsafe and
that structural explanations---rather than frequency matching---must be
the foundation of any robust theory.

\subsection{The problem of reverse inferential bias}

A common pattern in empirical research is the following inference:
\begin{quote}
``If a dataset obeys Zipf’s law (or Benford’s law), it is real;  
if it does not, it must be artificial.''
\end{quote}
This reasoning is deeply flawed.  
Zipf-like and Benford-like laws emerge from a wide range of mechanisms
with very different underlying structures.
Conversely, many perfectly natural processes do \emph{not} generate such
laws, despite being real, causal, and well understood.

Using a frequency law as a diagnostic test for reality is therefore an
example of \emph{reverse inferential bias}: treating a coarse statistical
signature as a proof of causal validity.

The Morphemic Combinatorial Word Model (MCWM) illustrates the danger:
the model generates Zipf-like rank--frequency curves without semantics,
communication, or cognitive principles.
The appearance of a Zipf exponent near 1 cannot be interpreted as evidence
for speaker optimization, information-theoretic efficiency, or any
specific linguistic mechanism.

\subsection{The analogy with Benford’s law}

The misuse of Zipf’s law parallels the well-known misuse of Benford’s law
in fraud detection.
Benford’s law is extremely sensitive to the generative mechanism, to
scaling, to truncation, and to aggregation.
Yet it is sometimes treated as a simple “truth detector’’:
data conforming to Benford are considered genuine, and deviations are
interpreted as fraud or fabrication.

This epistemic overreach has been widely criticized in the literature.
Benford conformity is neither necessary nor sufficient for authenticity.
Many legitimate industrial, biomedical, demographic, and experimental
datasets systematically deviate from Benford’s law for fully understood
mechanistic reasons.

Zipf’s law is even broader and even less mechanistically specific.  
Treating Zipf conformity as a test of “real language’’ or
“real cognition’’ is therefore even more risky.

\subsection{Why frequency laws are seductive}

Frequency laws are attractive for three reasons:
\begin{enumerate}
\item \textbf{Visual simplicity.}  
Rank--frequency plots and first-digit histograms are visually striking
and easy to communicate.

\item \textbf{Apparent universality.}  
The same curve appearing in linguistics, biology, finance, and geology
creates the illusion of a single underlying principle.

\item \textbf{Low data requirements.}  
Frequency signatures can be extracted from very small samples, creating
the false sense of strong evidence.
\end{enumerate}

But this seductiveness is precisely what creates epistemic danger.
A simple curve can mask deep structural differences between generative
processes.

\subsection{Structural explanations versus frequency conformity}

The approach of this paper is fundamentally structural.
Rather than using Zipf’s law as a validation metric, we explain how
Zipf-like behavior arises from a specific generative architecture:

\begin{itemize}
\item morphemic slots with activation probabilities,
\item combinatorial expansion of the type space,
\item compound length distributions,
\item filtering through lexical selection.
\end{itemize}

Zipf-like behavior is therefore a \emph{consequence}, not a criterion.
The model is not built to force the exponent to 1; the exponent emerges
as a byproduct of the combinatorial structure.

This stands in contrast with traditional “monkey typing’’ models, which
often tune termination probabilities or alphabet sizes specifically to
obtain a Zipf slope.

\subsection{Benford as a cautionary example}

Benford’s law provides a cautionary historical precedent.
Simple models (multiplicative cascades, random growth, random ratios)
naturally produce Benford-like first-digit distributions.
But once the mechanism is changed even slightly—e.g., scaling by a fixed
unit, truncation, bounding, or mixture of supports—the Benford signature
disappears.

This fragility shows that:
\begin{quote}
\textit{frequency conformity is not an invariant property of real data.}
\end{quote}

Similar fragility exists in language:
morphology, word-formation processes, writing systems, inflectional
complexity, and orthographic conventions all influence the resulting
rank--frequency curves.

\subsection{What frequency curves cannot tell us}

A Zipf plot cannot, by itself, reveal:
\begin{itemize}
\item the structure of morphemes,
\item the presence or absence of semantic optimization,
\item cognitive constraints on speakers or listeners,
\item the mechanisms of lexical growth,
\item whether the corpus is real or artificial,
\item whether the model captures linguistic reality.
\end{itemize}

These require structural, mechanistic, and linguistic analysis—not just
frequency matching.

\subsection{Guiding principle for interpretation}

The principle that guides our approach is:

\begin{quote}
\textbf{Frequency laws should be treated as descriptive summaries, not
as explanatory mechanisms.}
\end{quote}

Zipf-like behavior supports the plausibility of a model only in the weak
sense that it does not contradict empirical regularities.
It does not imply correctness, causality, or psychological grounding.

Conversely, a model that fails to reproduce a Zipf curve is not
necessarily wrong; it may simply represent a domain where Zipf’s law
does not apply.

\subsection{Implication for the MCWM framework}

For the Morphemic Combinatorial Word Model, the role of Zipf conformity
is strictly limited:

\begin{itemize}
\item It demonstrates that a structurally interpretable morphemic model
does not contradict large-scale statistical regularities.
\item It shows that explicit morphological structure can yield the same
macro-patterns that previously required semantic or cognitive
assumptions.
\item It avoids the epistemic trap of treating Zipf as a success metric.
\end{itemize}

In short, Zipf-like behavior is a consistency check, not a goal.

\subsection{Positioning the model within broader epistemology}

This section situates MCWM within a more general lesson:

\begin{quote}
\textit{Theories must explain structures, not merely reproduce curves.}
\end{quote}

A model that matches a frequency law but lacks an interpretable
structure is epistemically weak.
A model that explains the underlying mechanism is epistemically strong,
even if the resulting curve deviates from a canonical form.

The danger is not in using Zipf’s law, but in using it incorrectly:
as a validator rather than as an outcome.

This clarification is essential to avoid repeating the well-known
methodological pitfalls associated with Benford’s law.

\section{Conclusion}

We introduced the Morphemic Combinatorial Word Model (MCWM), 
a fully structural generator of word forms that replaces 
classical letter-based random-typing assumptions 
\citep{Zipf1949,Mandelbrot1953} 
with a realistic morphemic architecture.
Despite its simplicity, the model reproduces several key empirical 
regularities of natural language: 
unimodal word-length distributions,
finite yet combinatorially rich vocabularies,
and Zipf-like rank--frequency behavior with exponents in the 
empirical range \citep{Newman2005,Ferrer2001,Michel2011}.

The central finding of this paper is that 
\textbf{Zipf’s law can arise purely from morphological combinatorics}, 
without reference to semantics, pragmatics, cognitive optimization, 
or communication-theoretic principles.
This suggests that universality in rank--frequency distributions may be 
rooted in the structural geometry of word formation itself.

Future extensions include:
\begin{itemize}
\item incorporating phonological and phonotactic constraints,
\item modeling multi-root compounds and productive derivational chains,
\item calibrating MCWM against multilingual corpora,
\item exploring connections to modern NLP systems 
      (BPE, WordPiece, Morfessor) that implicitly rediscover 
      similar morphological structure.
\end{itemize}

The MCWM thus provides a principled structural baseline for explaining 
why Zipf-like distributions appear across languages, modalities, and scales.

\appendix

\section{Algorithms and Implementation Details of MCWM and SLF}
\label{app:algorithms_mcwm_slf}

This appendix summarizes the generative mechanisms discussed in the main text
in algorithmic form. The goal is to make the Morphemic Combinatorial Word Model
(MCWM) and the Stochastic Lexical Filter (SLF) fully reproducible, and to show
how the theoretical results, simulations, and rank--frequency curves can be
implemented in practice.

We assume that the reader is familiar with the notation and definitions from
the main sections of the paper. In particular, we assume:
\begin{itemize}
    \item a fixed set of morphological slots (prefix, root, derivational suffix, inflection),
    \item finite morpheme inventories for each slot,
    \item activation probabilities for each slot,
    \item optionally, a lexical filter acting on word types and their lengths.
\end{itemize}

The algorithms below are written in generic pseudocode and can be implemented in
any programming language.

\subsection{Data structures and parameters}

Before describing the algorithms, we collect the key inputs:

\begin{itemize}
    \item Morpheme inventories:
    \[
        \mathcal{P} = \{\text{prefix}_1, \dots, \text{prefix}_{n_P}\}, \quad
        \mathcal{R} = \{\text{root}_1, \dots, \text{root}_{n_R}\},
    \]
    \[
        \mathcal{D} = \{\text{deriv}_1, \dots, \text{deriv}_{n_D}\}, \quad
        \mathcal{I} = \{\text{infl}_1, \dots, \text{infl}_{n_I}\}.
    \]
    Each morpheme has an associated length in characters.
    \item Slot activation probabilities:
    \[
        p_{\text{pref}}, \quad p_{\text{root}}, \quad
        p_{\text{deriv}}, \quad p_{\text{infl}}.
    \]
    \item Categorical distributions over morphemes within each slot, for example
    \[
        \pi^{(\mathcal{R})}_j = \Pr(\text{choose root}_j \mid \text{root slot active}),
    \]
    and analogously for prefixes, derivational suffixes, and inflections.
    \item Lexical filter survival probabilities
    \[
        \phi(w) \in [0,1],
    \]
    possibly depending on word length, morphemic structure, or other features.
\end{itemize}

\subsection{Algorithm 1: Sampling a morphemic template}

The first step in MCWM is to decide which morphological slots are active for a
given word. This defines a \emph{template} that will be filled with concrete
morphemes.

\begin{algorithm}[H]
\caption{Sampling a morphemic template}
\label{alg:template}
\begin{algorithmic}[1]
\Require Slot activation probabilities
    $p_{\text{pref}}, p_{\text{root}}, p_{\text{deriv}}, p_{\text{infl}}$
\Ensure Binary activations $B_{\text{pref}}, B_{\text{root}},
    B_{\text{deriv}}, B_{\text{infl}}$

\State Sample $B_{\text{pref}} \sim \mathrm{Bernoulli}(p_{\text{pref}})$
\State Sample $B_{\text{root}} \sim \mathrm{Bernoulli}(p_{\text{root}})$
\State Sample $B_{\text{deriv}} \sim \mathrm{Bernoulli}(p_{\text{deriv}})$
\State Sample $B_{\text{infl}} \sim \mathrm{Bernoulli}(p_{\text{infl}})$

\If{$B_{\text{root}} = 0$}
    \Comment{Enforce at least one root-like element}
    \State Set $B_{\text{root}} \gets 1$
\EndIf

\State \Return $(B_{\text{pref}}, B_{\text{root}}, B_{\text{deriv}}, B_{\text{infl}})$
\end{algorithmic}
\end{algorithm}

This algorithm ensures that the root slot is always active, while the other
slots may be active or inactive depending on their Bernoulli parameters.
The template captures the structural profile of the word (e.g., prefix+root,
root+inflection, prefix+root+deriv+inflection, and so on).

\subsection{Algorithm 2: Generating a single word from MCWM}

Given a template, MCWM selects concrete morphemes in each active slot and
concatenates them to form a word.

\begin{algorithm}[H]
\caption{Sampling a single word from MCWM}
\label{alg:single_word}
\begin{algorithmic}[1]
\Require Morpheme inventories
    $\mathcal{P}, \mathcal{R}, \mathcal{D}, \mathcal{I}$,
    slot activation probabilities, and within-slot categorical distributions
\Ensure A generated word $w$ and its length $L(w)$ in characters

\State Sample template
    $(B_{\text{pref}}, B_{\text{root}}, B_{\text{deriv}}, B_{\text{infl}})$
    using Algorithm~\ref{alg:template}

\State Initialize word $w \gets$ empty string
\State Initialize length $L(w) \gets 0$

\If{$B_{\text{pref}} = 1$}
    \State Sample a prefix $M_{\text{pref}}$ from $\mathcal{P}$ using its categorical distribution
    \State Append $M_{\text{pref}}$ to $w$
    \State Update $L(w) \gets L(w) + \mathrm{len}(M_{\text{pref}})$
\EndIf

\If{$B_{\text{root}} = 1$}
    \State Sample a root $M_{\text{root}}$ from $\mathcal{R}$
    \State Append $M_{\text{root}}$ to $w$
    \State Update $L(w) \gets L(w) + \mathrm{len}(M_{\text{root}})$
\EndIf

\If{$B_{\text{deriv}} = 1$}
    \State Sample a derivational suffix $M_{\text{deriv}}$ from $\mathcal{D}$
    \State Append $M_{\text{deriv}}$ to $w$
    \State Update $L(w) \gets L(w) + \mathrm{len}(M_{\text{deriv}})$
\EndIf

\If{$B_{\text{infl}} = 1$}
    \State Sample an inflection $M_{\text{infl}}$ from $\mathcal{I}$
    \State Append $M_{\text{infl}}$ to $w$
    \State Update $L(w) \gets L(w) + \mathrm{len}(M_{\text{infl}})$
\EndIf

\State \Return $(w, L(w))$
\end{algorithmic}
\end{algorithm}

This algorithm directly reflects the morphemic slot architecture used in the
main text: each word is a concatenation of zero or one prefix, exactly one root,
zero or one derivational suffix, and zero or one inflection.

\subsection{Algorithm 3: Applying a Stochastic Lexical Filter}

The Stochastic Lexical Filter (SLF) formalizes the idea that not all
morphologically possible words survive into the usable lexicon. The filter can
depend on length, morpheme identity, frequency thresholds, or other features.

\begin{algorithm}[H]
\caption{Stochastic Lexical Filter (SLF) applied to a word type}
\label{alg:slf}
\begin{algorithmic}[1]
\Require Word type $w$ with length $L(w)$ and feature vector $F(w)$
\Require Survival function $\phi(w) \in [0,1]$
\Ensure Indicator $S(w) \in \{0,1\}$ of lexical survival

\State Compute survival probability $p_{\text{surv}} \gets \phi(w)$
\State Sample $S(w) \sim \mathrm{Bernoulli}(p_{\text{surv}})$
\State \Return $S(w)$
\end{algorithmic}
\end{algorithm}

In practice, $\phi(w)$ can be a function of word length alone, a function of
morpheme classes, or a more complex mapping that incorporates phonotactics
and lexical constraints. The theoretical results in the main text assume a
broad class of such filters and show that Zipf-like tails are preserved under
wide conditions.

\subsection{Algorithm 4: Generating a synthetic corpus and Zipf curve}

Finally, we describe the full pipeline for generating a synthetic corpus,
applying MCWM and SLF, and computing an empirical rank--frequency curve.

\begin{algorithm}[H]
\caption{Generating a corpus and Zipf curve from MCWM + SLF}
\label{alg:corpus_zipf}
\begin{algorithmic}[1]
\Require Number of tokens $N_{\text{tokens}}$
\Require MCWM parameters (morpheme inventories, activation probabilities, categorical distributions)
\Require SLF survival function $\phi(w)$
\Ensure Empirical rank--frequency curve $\{(r, f(r))\}$

\State Initialize an empty dictionary \texttt{counts} mapping word types to integer counts

\For{$t = 1$ to $N_{\text{tokens}}$}
    \State Generate a candidate word $(w, L(w))$ using Algorithm~\ref{alg:single_word}
    \State Compute survival indicator $S(w)$ using Algorithm~\ref{alg:slf}
    \If{$S(w) = 1$}
        \State Update \texttt{counts[$w$]} $\gets$ \texttt{counts[$w$]} + 1
    \EndIf
\EndFor

\State Extract all word types with positive counts: $\{w_1, \dots, w_K\}$
\State Compute empirical frequencies
    \[
        f(w_i) = \frac{\texttt{counts}[w_i]}{\sum_{j=1}^K \texttt{counts}[w_j]}
        \quad \text{for } i = 1,\dots,K.
    \]

\State Sort $\{w_i\}$ in decreasing order of $f(w_i)$ to obtain ranks $r = 1,2,\dots,K$
\State Define the rank--frequency function $f(r)$ by assigning $f(1)$ to the most frequent word, $f(2)$ to the second most frequent, and so on

\State \Return $\{(r, f(r)) : r = 1,\dots,K\}$
\end{algorithmic}
\end{algorithm}

This algorithm corresponds directly to the simulation procedures in the main
paper. By varying:
\begin{itemize}
    \item the activation probabilities of the slots,
    \item the morpheme inventories and their length distributions,
    \item the survival function $\phi(w)$,
\end{itemize}
one can explore how the shape of the empirical Zipf curve changes, and verify
the robustness of the theoretical results with respect to model parameters.

\subsection{Remarks on implementation}

In practical implementations, several optimizations are useful:
\begin{itemize}
    \item Caching morpheme lengths to avoid repeated length computations.
    \item Precomputing templates and reusing them to study the effect of
    different lexical filters.
    \item Using efficient hash maps or dictionaries for counting word types.
    \item Parallelizing the token generation loop when $N_{\text{tokens}}$ is large.
\end{itemize}

These details do not affect the theoretical conclusions of the paper, but they
make it easier to reproduce the figures and to extend the model to larger
synthetic corpora or to more complex morphological scenarios.

=

\end{document}